\begin{document}

\preprint{APS/123-QED}

\title{Nonlocality without entanglement in general multipartite quantum systems}

\author{Xiao-Fan Zhen$^{1}$}
\author{Shao-Ming Fei$^{4}$}
\author{Hui-Juan Zuo$^{1,2,3}$}
\email{huijuanzuo@163.com}

\affiliation{
$^{1}$School of Mathematical Sciences, Hebei Normal University, Shijiazhuang, 050024, China\\
$^{2}$Hebei Key Laboratory of Computational Mathematics and Applications, Shijiazhuang, 050024, China\\
$^{3}$Hebei International Joint Research Center for Mathematics and Interdisciplinary Science, Shijiazhuang, 050024, China\\
$^{4}$School of Mathematical Sciences, Capital Normal University, Beijing, 100048, China
}%


\begin{abstract}
The construction of nonlocal sets of quantum states has attracted much attention in recent years. We first introduce two lemmas related to the triviality of orthogonality-preserving local measurements. Then we propose a general construction of nonlocal set of $n(d-1)+1$ orthogonal product states in $(\mathbb{C}^{d})^{\otimes n}$. The sets of nonlocal orthogonal product states are also put forward for multipartite quantum systems with arbitrary dimensions. Our construction gives rise to nonlocal sets of orthogonal product states with much less members and thus reveals the phenomenon of nonlocality without entanglement more efficiently.

\begin{description}
\item[PACS numbers]
03.65.Ud, 03.67.Mn
\end{description}
\end{abstract}

\pacs{Valid PACS appear here}
\maketitle


\section{\label{sec:level1}Introduction\protect}

In recent years, local discrimination of quantum states has attracted more and more attention. A set of orthogonal multipartite quantum states is said to be locally indistinguishable if they cannot be discriminated through local operations and classical communication (LOCC), which reflects an important feature of quantum mechanics called nonlocality. Fruitful results have been presented to locally indistinguishable orthogonal product states and orthogonal entangled states \cite{Bennett1999,Bennett 1999UPB,Walgate2000,Ghosh2001,Walgate2002,Alon2001,DiVincenzo 2003CPM,Ghosh2004,Chen2004,Fan2004,Niset2006,Cohen2007,Xin2008,Yang2013,Zhangzc2014,Yu,Wangyl2015,Zhangzc2015,Zhangxq2016,Xu2016,
Zhangzc2016,Wangyl2017,Zhangzc2017,Halder2018,Jiang2020,Xu2021,Zuo2022} with wide applications in quantum cryptographic protocols \cite{Zhangzc2018,Li2019QIP,Li2019PRA,Yang2015,Wang2017,Jiangdh2020}.
An unextendible product basis (UPB) is given by a set of orthogonal product vectors whose complementary subspace contains no product basic states. The UPBs reflect the well-known phenomenon of nonlocality without entanglement and have an interesting feature, namely the states in the UPB cannot be perfectly distinguished by local positive operator-valued
measures(POVMs) and classical communication \cite{Bennett 1999UPB, DiVincenzo 2003CPM}. The concept of strong nonlocality was further introduced by Halder {\it et al}. \cite{Halder2019} based on locally irreducible quantum states. In multipartite quantum systems, there exist a number of quantum states that manifest strong quantum nonlocality \cite{Halder2019,Zhangzc2019,Rout2019,Yuan2020,Shi2020}.

Especially, since Bennett {\it et al}. put forward the phenomenon of nonlocality without entanglement in Ref. \cite{Bennett1999}, the construction of nonlocal state sets with a smaller number has become a major task to illustrate the nonlocality. An important method was provided to verify the local indistinguishability of orthogonal product states in Ref. \cite{Walgate2002}, showing that no matter which local party goes first to be operated, only trivial measurements can be performed. Subsequently, a series of progress had been made toward to the research on bipartite quantum systems. Zhang {\it et al}. \cite{Zhangzc2016} found that $2n-1$ orthogonal product states cannot be locally distinguished in $\mathbb{C}^{m}\otimes \mathbb{C}^{n}$, where $3\leq m\leq n$.

It is more difficult to construct LOCC indistinguishable states in multipartite quantum systems.
In Ref. \cite{Bennett 1999UPB} the authors presented a simple lower bound on the size $n$ of UPB in $\mathbb{C}^{d_{1}}\otimes \mathbb{C}^{d_{2}}\otimes \cdots\otimes \mathbb{C}^{d_{n}}$, $n\geq \sum_{i}(d_i-1)+1$. Further, Alon {\it et al}. investigated all the cases of $n=\sum_{i}(d_i-1)+1$ and showed that the minimum possible cardinality of a UPB is $\sum_{i}(d_i-1)+1$ when $d_i$ is odd \cite{Alon2001}. In Ref. \cite{Niset2006}, Niset {\it et al}. provided the nonlocal orthogonal product bases in multipartite quantum systems with arbitrary dimensions. Then, Xu {\it et al}. proposed a general method to construct a complete $n$-partite product basis with only $2n$ members in Ref. \cite{Xu2016}. By using three-dimensional cubes \cite{Wangyl2017}, Wang {\it et al.} constructed a number of $2(n_{1}+n_{3})-3$ orthogonal product states in $\mathbb{C}^{n_{1}}\otimes \mathbb{C}^{n_{2}}\otimes \mathbb{C}^{n_{3}}$, which cannot be perfectly distinguished by LOCC. Then, Zhang {\it et al}. showed the phenomenon of multipartite quantum nonlocality without entanglement in Ref. \cite{Zhangzc2017}. Halder {\it et al}. \cite{Halder2018} put forward an alternative construction of $2n(d+1)$ locally indistinguishable product states in multipartite quantum systems.

Recently, Jiang {\it et al}. \cite{Jiang2020} successfully proposed a less number of $\sum_{i=1}^{n}(2d_{i}-3)+1$ locally indistinguishable quantum states in $\mathbb{C}^{d_{1}}\otimes \mathbb{C}^{d_{2}}\otimes \cdots\otimes \mathbb{C}^{d_{n}}$. For general multipartite quantum systems, it is still a challenging problem to find locally indistinguishable quantum state sets with less number of orthogonal product states in which any subsystem can only perform a trivial POVM.

In this manuscript, we put forward the construction of LOCC indistinguishable orthogonal product states in multipartite quantum systems. Based on the stopper state, we provide two important lemmas to show that an orthogonality-preserving local measurement must be trivial. First, we construct locally indistinguishable orthogonal product states with less number in the equal-dimensional multipartite quantum systems. Then we propose the set of $\sum\limits_{i=2}^{n-1}d_{i}+2d_{n}-n+1$ nonlocal multipartite orthogonal product states in $\mathbb{C}^{d_{1}}\otimes \mathbb{C}^{d_{2}}\otimes \cdots\otimes \mathbb{C}^{d_{n}}$ for $3\leq d_{1}\leq d_{2}\cdots\leq d_{n}$ and $n\geq 3$. Compared with the previous results, our results demonstrate a better cognition of quantum nonlocality without entanglement.

\theoremstyle{remark}
\newtheorem{definition}{\indent Definition}
\newtheorem{lemma}{\indent Lemma}
\newtheorem{theorem}{\indent Theorem}
\newtheorem{proposition}{\intent Proposition}
\newtheorem{corollary}{\indent Corollary}
\def\QEDclosed{\mbox{\rule[0pt]{1.3ex}{1.3ex}}}|
\def\QED{\QEDclosed}
\def\proof{\indent{\em Proof}.}
\def\endproof{\hspace*{\fill}~\QED\par\endtrivlist\unskip}

\section{Preliminaries}

We take the computational basis $\{|i\rangle\}_{i=0}^{d_{k}-1}$ for each $d_{k}$-dimensional subsystem. For simplicity, we denote the state $\frac{1}{\sqrt{n}}(|i_1\rangle \pm |i_2\rangle\pm \cdots \pm |i_n\rangle)$ as $|i_1\pm i_2\pm\cdots \pm i_n\rangle$ and $\mathbb Z_{d_{k}}=\{0, 1,\cdots, d_{k}-1\}$.
Let
\begin{equation}\label{equa1}
\begin{aligned}
|\phi_{p_{i}^{k},i}\rangle= \otimes_{k=1}^{n}(\sum_{i_{k}\in \mathbb Z_{d_{k}}} p_{i}^{k}|i_{k}\rangle)
\end{aligned}
\end{equation}
be $n$-partite orthogonal product states, where $p_{i}^{k}=0$, $-1$ or $1$ and at least one $p_{i}^{k}\neq 0$. In particular, the stopper state $|S\rangle$ is given by $p_{i}^{k}=1$,
\begin{equation}
\begin{aligned}
|S\rangle &= \otimes_{k=1}^{n}(\sum_{i_{k}\in \mathbb Z_{d_{k}}}|i_{k}\rangle).
\end{aligned}
\end{equation}

To certify the nonlocality of a set of multipartite orthogonal product states, one performs local POVM on these states such that the post-measurement states remain orthogonal. Each POVM element  $M_{k}^{\dagger}M_{k}$ can be expressed as a $d_{k}\times d_{k}$ matrix $E_k$ in the computational basis. A measurement is trivial if all the POVM elements are proportional to the identity operator. The set of orthogonal quantum states cannot be distinguished by LOCC if the POVMs are all trivial.

We first bring forward two lemmas to obtain the zero and diagonal entries of the matrix $E_t=(m_{i,j}^{t})_{i,j\in \mathbb Z_{d_{t}}}$. Consider a pair of multipartite orthogonal product states $\{|\phi_{p_{i}^{k},i}\rangle,~|\phi_{p_{j}^{k},j}\rangle\}$ whose $n-1$ subsystems are not mutually orthogonal except for the $t$-th subsystem. A local POVM is preformed on the $t$-th subsystem and identity operators are preformed on the rest $n-1$ subsystems.

\begin{lemma}({\bf Zero entries})\label{lem:zero}
If there exist only one $p_{a}^{t}\neq 0$ for $|\phi_{p_{i}^{k},i}\rangle$ and one $p_{b}^{t}\neq 0$ for $|\phi_{p_{j}^{k},j}\rangle$, then
\begin{equation}
\begin{aligned}
m_{a,b}^{t}= 0,
\end{aligned}
\end{equation}
where $0\leq a\neq b\leq d_{t}-1$, $1\leq t \leq n$.
\end{lemma}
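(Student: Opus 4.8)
The plan is to read the conclusion straight off the orthogonality-preserving requirement, exploiting the tensor-product structure of the states. Since the set is orthogonal and the measurement applies $E_t$ on the $t$-th party with the identity on the other $n-1$ parties, the post-measurement states remain orthogonal exactly when
\[
\langle \phi_{p_j^k,j}|\,\bigl(I^{\otimes(t-1)}\otimes E_t\otimes I^{\otimes(n-t)}\bigr)\,|\phi_{p_i^k,i}\rangle = 0 .
\]
First I would write each product state through its local factors, $|\phi_{p_i^k,i}\rangle=\bigotimes_{k=1}^{n}|\alpha_i^{k}\rangle$ with $|\alpha_i^{k}\rangle=\sum_{i_k\in\mathbb Z_{d_k}}p_i^{k}|i_k\rangle$, and analogously for the $j$-th state.

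Because $E_t$ is supported on subsystem $t$ while the identity acts everywhere else, the matrix element factorizes across the tensor product into a product of single-party overlaps times one nontrivial term,
\[
\Bigl(\prod_{k\neq t}\langle\alpha_j^{k}|\alpha_i^{k}\rangle\Bigr)\,\langle\alpha_j^{t}|E_t|\alpha_i^{t}\rangle = 0 .
\]
The hypothesis is precisely that the two states are non-orthogonal on every subsystem except the $t$-th, so $\langle\alpha_j^{k}|\alpha_i^{k}\rangle\neq 0$ for all $k\neq t$ and the leading product is nonzero. Dividing it out then forces the remaining factor to vanish, $\langle\alpha_j^{t}|E_t|\alpha_i^{t}\rangle=0$.

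Finally I would substitute the special form of the $t$-th factors dictated by the lemma. Since $|\alpha_i^{t}\rangle$ carries a single nonzero amplitude $p_a^{t}\in\{\pm1\}$ at position $a$ and $|\alpha_j^{t}\rangle$ a single nonzero amplitude $p_b^{t}\in\{\pm1\}$ at position $b$, we have $|\alpha_i^{t}\rangle=p_a^{t}|a\rangle$ and $|\alpha_j^{t}\rangle=p_b^{t}|b\rangle$, whence $\langle\alpha_j^{t}|E_t|\alpha_i^{t}\rangle=p_b^{t}p_a^{t}\,m_{b,a}^{t}=0$. As $p_a^{t}p_b^{t}=\pm1\neq0$ this yields $m_{b,a}^{t}=0$, and since $E_t=M_t^{\dagger}M_t$ is Hermitian this is equivalent to $m_{a,b}^{t}=0$ for $a\neq b$, which is the claim.

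I do not expect a genuine obstacle here; the argument is essentially bookkeeping. The one step demanding care is the factorization: one must verify that the orthogonality-preserving condition produces exactly the product of the $n-1$ local overlaps multiplied by the single $E_t$-term, so that the non-orthogonality assumption on the remaining subsystems can be legitimately cancelled to isolate the $t$-th contribution.
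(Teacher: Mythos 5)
Your proposal is correct and follows essentially the same route as the paper: write the orthogonality-preserving condition $\langle \phi_{p_j^k,j}|(\mathbb{I}_1\otimes\cdots\otimes M_t^{\dagger}M_t\otimes\cdots\otimes\mathbb{I}_n)|\phi_{p_i^k,i}\rangle=0$, factor it across the tensor product, cancel the nonzero overlaps on the $n-1$ non-orthogonal subsystems, and read off $m_{a,b}^t=0$ from the single surviving entry. The only difference is that you spell out the factorization and cancellation step explicitly, which the paper leaves implicit in its passage from its Eq.~(4) to Eq.~(5).
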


\begin{proof} Since the post-measurement states $\{\mathbb {I}_{1}\otimes \cdots \otimes M_{t}\otimes \cdots \otimes \mathbb {I}_{n}|\phi_{p_{i}^{k},i}\rangle\}$ and $\{\mathbb {I}_{1}\otimes \cdots \otimes M_{t}\otimes \cdots \otimes \mathbb {I}_{n}|\phi_{p_{j}^{k},j}\rangle\}$ should be mutually orthogonal, we have
\begin{equation}
\begin{aligned}
\langle \phi_{p_{i}^{k},i}|(\mathbb {I}_{1}\otimes \cdots \otimes M_{t}^{\dagger}M_{t}\otimes \cdots \otimes \mathbb {I}_{n})|\phi_{p_{j}^{k},j}\rangle= 0,
\end{aligned}
\end{equation}
namely,
\begin{equation}
\begin{aligned}
(\sum_{i=0}^{d_{t}-1}p_{i}^{t}\langle i_{t}|)E_{t}(\sum_{j=0}^{d_{t}-1}p_{j}^{t}|j_{t}\rangle)&= 0.
\end{aligned}
\end{equation}
Because only one $p_{a}^{t}\neq 0$ for $|\phi_{p_{i}^{k},i}\rangle$ and one $p_{b}^{t}\neq 0$ for $|\phi_{p_{j}^{k},j}\rangle$, we have
$\langle a_{t}|E_{t}|b_{t}\rangle=0$. Hence, $m_{a,b}^{t}=0$ for any $0\leq a\neq b\leq d_{t}-1$ and $1\leq t \leq n$.
\end{proof}

\begin{lemma}({\bf Diagonal entries})\label{lem:Dia}
For $|\phi_{p_{j}^{k},j}\rangle=|S\rangle$, if all the off-diagonal entries of the matrix $E_{t}$ are zeros, and there exist only one $p_{a}^{t}=1$ and one $p_{b}^{t}=-1$ for $|\phi_{p_{i}^{k},i}\rangle$, then
\begin{equation}
\begin{aligned}
m_{a,a}^{t}= m_{b,b}^{t},
\end{aligned}
\end{equation}
where $0\leq a\neq b\leq d_{t}-1$, $1\leq t \leq n$.
\end{lemma}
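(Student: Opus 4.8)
The plan is to mirror the argument of Lemma~\ref{lem:zero}, beginning from the orthogonality-preserving condition for the pair $\{|\phi_{p_{i}^{k},i}\rangle,\,|S\rangle\}$ when the POVM element $E_{t}$ acts on the $t$-th subsystem and identities act on the remaining $n-1$ subsystems. Writing $|\phi_{i}^{(k)}\rangle$ and $|S^{(k)}\rangle$ for the $k$-th tensor factors of the two product states, the requirement
\[
\langle \phi_{p_{i}^{k},i}|(\mathbb{I}_{1}\otimes\cdots\otimes E_{t}\otimes\cdots\otimes\mathbb{I}_{n})|S\rangle = 0
\]
factorizes across the tensor product into
\[
\Big(\prod_{k\neq t}\langle\phi_{i}^{(k)}|S^{(k)}\rangle\Big)\,\langle\phi_{i}^{(t)}|E_{t}|S^{(t)}\rangle = 0 .
\]

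First I would invoke the standing hypothesis that, for this pair, the $n-1$ subsystems other than the $t$-th are pairwise non-orthogonal, so each factor $\langle\phi_{i}^{(k)}|S^{(k)}\rangle$ with $k\neq t$ is nonzero, and hence so is their product. Dividing it out reduces the condition to the single scalar equation $\langle\phi_{i}^{(t)}|E_{t}|S^{(t)}\rangle = 0$.

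Next I would substitute the explicit $t$-th factors. By assumption the only nonzero coefficients of $|\phi_{i}^{(t)}\rangle$ are $p_{a}^{t}=1$ and $p_{b}^{t}=-1$, so up to an irrelevant overall normalization $\langle\phi_{i}^{(t)}| = \langle a|-\langle b|$, while $|S^{(t)}\rangle = \sum_{c\in\mathbb{Z}_{d_{t}}}|c\rangle$. Since every off-diagonal entry of $E_{t}$ vanishes by hypothesis, $E_{t}|c\rangle = m_{c,c}^{t}|c\rangle$, giving $E_{t}|S^{(t)}\rangle = \sum_{c}m_{c,c}^{t}|c\rangle$. Pairing this with $\langle a|-\langle b|$ collapses the sum to $m_{a,a}^{t}-m_{b,b}^{t}$, so the equation reads $m_{a,a}^{t}-m_{b,b}^{t}=0$, which is precisely the claim $m_{a,a}^{t}=m_{b,b}^{t}$.

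There is no serious obstacle once Lemma~\ref{lem:zero} is available. The only point requiring care is the justification that the product of the off-$t$ inner products is genuinely nonzero, which is exactly where the non-orthogonality hypothesis on the remaining $n-1$ subsystems enters; I would also explicitly record that the diagonality of $E_{t}$ used here is supplied by applying Lemma~\ref{lem:zero} to each off-diagonal position.
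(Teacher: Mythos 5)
Your proposal is correct and follows essentially the same route as the paper's proof: both reduce the orthogonality-preserving condition to the single-subsystem equation $(\langle a_t|-\langle b_t|)E_t(\sum_c|c_t\rangle)=0$ and use the assumed diagonality of $E_t$ to collapse it to $m_{a,a}^t-m_{b,b}^t=0$. Your explicit justification that the inner-product factors on the other $n-1$ subsystems are nonzero (so they can be divided out) is exactly the standing non-orthogonality hypothesis the paper invokes implicitly before Lemma~\ref{lem:zero}, so there is no substantive difference.
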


\begin{proof} Because the post-measurement states $\{\mathbb{I}_{1}\otimes \cdots \otimes M_{t}\otimes \cdots \otimes \mathbb{I}_{n}|\phi_{p_{i}^{k},i}\rangle\}$ and $\{\mathbb{I}_{1}\otimes \cdots \otimes M_{t}\otimes \cdots \otimes \mathbb{I}_{n}|S\rangle\}$ are mutually orthogonal, we have
\begin{equation}
\begin{aligned}
\langle \phi_{p_{i}^{k},i}|(\mathbb {I}_{1}\otimes \cdots \otimes M_{t}^{\dagger}M_{t}\otimes \cdots \otimes \mathbb {I}_{n})|S\rangle= 0,
\end{aligned}
\end{equation}
which gives rise to
\begin{equation}
\begin{aligned}
(\sum_{i=0}^{d_{t}-1}p_{i}^{t}\langle i_{t}|)E_{t}(\sum_{j=0}^{d_{t}-1}|j_{t}\rangle)&= 0.
\end{aligned}
\end{equation}
Since only one $p_{a}^{t}= 1$ and one $p_{b}^{t}= -1$ for $|\phi_{p_{i}^{k},i}\rangle$ and all $m_{i,j}^{t}= 0$ for $0\leq i\neq j\leq d_{t}-1$, we obtain
\begin{equation}
\begin{aligned}
(\langle a_{t}|-\langle b_{t}|)E_{t}(|a_{t}\rangle+|b_{t}\rangle)&= 0.
\end{aligned}
\end{equation}
Hence, $m_{a,a}^{t}= m_{b,b}^{t}$ for any $0\leq a\neq b\leq d_{t}-1$ and  $1\leq t \leq n$.
\end{proof}

\section{Construction Of Nonlocal States in $(\mathbb{C}^{d})^{\otimes n}$}

In this section, we propose a nonlocal set of orthogonal product states in $(\mathbb{C}^{d})^{\otimes n}$. First we illustrate the construction for a four-partite case.

\begin{lemma}\label{lem:$4(d-1)+1$} The set of following $4(d-1)+1$ orthogonal product states is indistinguishable by LOCC in $\mathbb{C}^d\otimes \mathbb{C}^d\otimes \mathbb{C}^d\otimes \mathbb{C}^d$:
\begin{equation}
\begin{aligned}
|\phi_{i}\rangle&=|0-i\rangle_{1}|0\rangle_{2}|0\rangle_{3}|i\rangle_{4},\\
|\phi_{i+(d-1)}\rangle & =|i\rangle_{1}|0-i\rangle_{2}|0\rangle_{3}|0\rangle_{4}, \\
|\phi_{i+2(d-1)}\rangle & =|0\rangle_{1}|i\rangle_{2}|0-i\rangle_{3}|0\rangle_{4},\\
|\phi_{i+3(d-1)}\rangle & =|0\rangle_{1}|0\rangle_{2}|i\rangle_{3}|0-i\rangle_{4},\\
|\phi_{4(d-1)+1}\rangle & =|0+\cdots+(d-1)\rangle_{1}|0+\cdots+(d-1)\rangle_{2}\\& ~~~~|0+\cdots+(d-1)\rangle_{3}|0+\cdots+(d-1)\rangle_{4},
\end{aligned}
\end{equation}
where $i=1,2,\cdots, d-1$,~$d\geq 3$.
\end{lemma}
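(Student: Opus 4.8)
The plan is to adopt the Walgate--Hardy strategy: to establish LOCC indistinguishability it is enough to show that whichever party measures first can implement only a trivial, orthogonality-preserving POVM, because a trivial measurement rescales the set without altering it and the same obstruction then recurs at every subsequent step. I would therefore fix an arbitrary party $t\in\{1,2,3,4\}$, write its POVM element as $E_{t}=(m_{i,j}^{t})_{i,j\in\mathbb{Z}_{d}}$, and prove $E_{t}\propto\mathbb{I}$ by first using Lemma~\ref{lem:zero} to annihilate the off-diagonal entries and then using Lemma~\ref{lem:Dia} to equate the diagonal ones.

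The construction is cyclically symmetric: in the $k$-th block the factor $|0-i\rangle$ occupies subsystem $k+1$ while the single basis vector $|i\rangle$ occupies subsystem $k$ (indices read modulo $4$), with the remaining two subsystems in the state $|0\rangle$. Hence it suffices to treat $t=1$ and let the same pattern cover $t=2,3,4$. To obtain $m_{a,b}^{1}=0$ for $a,b\in\{1,\dots,d-1\}$ I would pair two states of a single block whose first subsystems are orthogonal basis vectors, e.g. $|\phi_{a+(d-1)}\rangle=|a\rangle_{1}|0-a\rangle_{2}|0\rangle_{3}|0\rangle_{4}$ and $|\phi_{b+(d-1)}\rangle=|b\rangle_{1}|0-b\rangle_{2}|0\rangle_{3}|0\rangle_{4}$; here subsystems $2,3,4$ overlap while subsystem $1$ carries the orthogonality, so Lemma~\ref{lem:zero} gives $m_{a,b}^{1}=0$. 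For the entries $(0,b)$ I would instead use a cross-block pair that is orthogonal in subsystem $1$ yet non-orthogonal in the other three subsystems, e.g. $|\phi_{b+(d-1)}\rangle$ and $|\phi_{b+2(d-1)}\rangle=|0\rangle_{1}|b\rangle_{2}|0-b\rangle_{3}|0\rangle_{4}$, yielding $m_{0,b}^{1}=m_{b,0}^{1}=0$ (one from Lemma~\ref{lem:zero}, the other from Hermiticity of $E_{t}$).

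With the off-diagonal entries of $E_{t}$ eliminated, I would fix the diagonal entries against the stopper state. The key observation is that $|0-i\rangle$ is orthogonal to $|0+1+\cdots+(d-1)\rangle$, so any state carrying $|0-i\rangle$ in subsystem $t$ and single basis vectors elsewhere is orthogonal to $|S\rangle$ \emph{only} through subsystem $t$. For $t=1$ this state is $|\phi_{i}\rangle=|0-i\rangle_{1}|0\rangle_{2}|0\rangle_{3}|i\rangle_{4}$, and Lemma~\ref{lem:Dia} (with $a=0$, $b=i$) forces $m_{0,0}^{1}=m_{i,i}^{1}$ for each $i=1,\dots,d-1$; by transitivity all diagonal entries coincide, so $E_{1}\propto\mathbb{I}$. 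Running the identical argument with the block whose $|0-i\rangle$ sits in subsystem $t$ disposes of $t=2,3,4$, and since every party's first measurement is forced to be trivial, the set is LOCC indistinguishable.

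I expect the combinatorial bookkeeping of the zero-entries step to be the main obstacle: for each target position $(0,b)$ one must locate a cross-block pair that is orthogonal precisely in subsystem $t$ while remaining non-orthogonal in all three other subsystems, and it is exactly the cyclic placement of $|i\rangle$ one site ahead of $|0-i\rangle$ that guarantees such a pair exists; one should check that the relevant overlaps stay nonzero for every $d\geq 3$. As a preliminary I would also confirm that the $4(d-1)+1$ states are genuinely pairwise orthogonal, which is routine but underpins the orthogonality-preservation conditions used in both lemmas.
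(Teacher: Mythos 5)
Your proposal is correct and follows essentially the same route as the paper's own proof: the same symmetry reduction to party $1$, the same pairs of states (block two with block three for $m_{0,i}^{1}$, within block two for $m_{i,j}^{1}$, and $|\phi_{i}\rangle$ against the stopper for the diagonal entries), and the same two lemmas to conclude $E_{1}\propto\mathbb{I}$. The only cosmetic difference is that you invoke Hermiticity for $m_{b,0}^{1}$ where the paper reads both zeros off the orthogonality condition directly, which is immaterial.
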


\begin{proof} Since the states are symmetric, we only need to prove that the measurement applied to the first subsystem is trivial.

Applying Lemma~\ref{lem:zero} to the states $|\phi_{i+(d-1)}\rangle$ and $|\phi_{i+2(d-1)}\rangle$, we obtain $m_{i,0}^{1}= m_{0,i}^{1}= 0$ for $1\leq i\leq d-1$ by the orthogonality. Similarly, for the states $|\phi_{i+(d-1)}\rangle$ and $|\phi_{j+(d-1)}\rangle$, we have $m_{i,j}^{1}= m_{j,i}^{1}= 0$, where $1\leq i\neq j\leq d-1$.

Consider the states $|\phi_{i}\rangle$ and $|\phi_{4(d-1)+1}\rangle$, we directly get $m_{0,0}^{1}= m_{i,i}^{1}$ with $1\leq i\leq d-1$ by Lemma~\ref{lem:Dia}.

In summary, $E_{1}=(m_{i,j}^{1})_{i,j\in \mathbb Z_{d}}$ is proportional to the identity matrix, so one cannot start with a nontrivial measurement on the first subsystem. Thus, the above $4(d-1)+1$ states cannot be distinguished by LOCC.
\end{proof}

In Ref. \cite{Jiang2020}, the authors presented a set of $8d-11$ indistinguishable orthogonal product states by LOCC in $\mathbb{C}^d\otimes \mathbb{C}^d\otimes \mathbb{C}^d\otimes \mathbb{C}^d$. Here we put forward a set of $4d-3$ indistinguishable orthogonal product states by LOCC in $\mathbb{C}^d\otimes \mathbb{C}^d\otimes \mathbb{C}^d\otimes \mathbb{C}^d$. Our construction has less number of states for $d>2$.

Based on the above structure, we propose the general construction in $(\mathbb{C}^{d})^{\otimes n}$.

\begin{theorem}\label{the:$n(d-1)+1$} The following $n(d-1)+1$ orthogonal product states in $(\mathbb{C}^{d})^{\otimes n}$ cannot be perfectly distinguished by LOCC:
\begin{equation}
\begin{aligned}
|\phi_{i}\rangle & =|0-i\rangle_{1}|0\rangle_{2}|0\rangle_{3}\cdots|0\rangle_{n-2}|0\rangle_{n-1}|i\rangle_{n},\\
|\phi_{i+(d-1)}\rangle & =|i\rangle_{1}|0-i\rangle_{2}|0\rangle_{3}\cdots|0\rangle_{n-2}|0\rangle_{n-1}|0\rangle_{n},\\
|\phi_{i+2(d-1)}\rangle & =|0\rangle_{1}|i\rangle_{2}|0-i\rangle_{3}\cdots|0\rangle_{n-2}|0\rangle_{n-1}|0\rangle_{n},\\
& ~~~~~~~~\cdots~ \cdots ~\cdots~~~~~~~~~~~~\\
|\phi_{i+(n-2)(d-1)}\rangle & =|0\rangle_{1}|0\rangle_{2}|0\rangle_{3}\cdots|i\rangle_{n-2}|0-i\rangle_{n-1}|0\rangle_{n},\\
|\phi_{i+(n-1)(d-1)}\rangle & =|0\rangle_{1}|0\rangle_{2}|0\rangle_{3}\cdots|0\rangle_{n-2}|i\rangle_{n-1}|0-i\rangle_{n},\\
|\phi_{n(d-1)+1}\rangle & =|0+\cdots+(d-1)\rangle_{1}|0+\cdots+(d-1)\rangle_{2}\\ &~~~~\cdots|0+\cdots+(d-1)\rangle_{n},
\end{aligned}
\end{equation}
where $i=1,\cdots,d-1$, $d\geq3,\, n\geq3$.
\end{theorem}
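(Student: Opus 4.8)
The plan is to follow exactly the strategy used in Lemma~\ref{lem:$4(d-1)+1$}, exploiting the permutation symmetry of the construction so that it suffices to show the measurement on a single subsystem is trivial. The states are symmetric under cyclic relabeling of the parties together with the index structure, so without loss of generality I would prove that the POVM element $E_{t}$ performed on the $t$-th subsystem is proportional to the identity, and by symmetry the same conclusion holds for every $t$. Since a measurement is trivial precisely when all POVM elements are proportional to $\mathbb{I}$, this establishes that no party can initiate a nontrivial orthogonality-preserving measurement, whence the set is LOCC indistinguishable.

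The key steps proceed in two stages, mirroring the two preliminary lemmas. First I would kill the off-diagonal entries of $E_{t}=(m_{i,j}^{t})_{i,j\in\mathbb{Z}_{d}}$ using Lemma~\ref{lem:zero}. The crucial observation is that in the list of states, the $t$-th tensor factor is supported on a single basis vector $|i\rangle$ or $|0\rangle$ for all the ``$|0-i\rangle$-type'' states except for the two blocks that place a ``$|0-i\rangle$'' precisely in slot $t$; so I would pair up appropriate states $|\phi_{\cdot}\rangle$ whose other $n-1$ subsystems are non-orthogonal while the $t$-th components are single basis vectors $|a\rangle$ and $|b\rangle$ with $a\neq b$, and invoke Lemma~\ref{lem:zero} to get $m_{a,b}^{t}=0$. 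Running over all admissible pairs $0\le a\neq b\le d-1$ forces every off-diagonal entry to vanish. Second, with $E_{t}$ now diagonal, I would pair each single-``$|0-i\rangle$'' state (the one carrying a $\tfrac{1}{\sqrt2}(|0\rangle-|i\rangle)$ in slot $t$) against the stopper state $|\phi_{n(d-1)+1}\rangle$ and apply Lemma~\ref{lem:Dia} to conclude $m_{0,0}^{t}=m_{i,i}^{t}$ for each $1\le i\le d-1$. Together these give that all diagonal entries coincide and all off-diagonal entries vanish, i.e. $E_{t}\propto\mathbb{I}_{d}$.

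The main obstacle, and the place requiring care rather than routine computation, is the bookkeeping that guarantees the pairs of states I feed into each lemma genuinely satisfy its hypotheses: namely that the two states agree (are non-orthogonal) on all subsystems except the $t$-th, and that in the $t$-th subsystem one state contributes exactly one nonzero coefficient $p_{a}^{t}$ and the other exactly one $p_{b}^{t}$ (for Lemma~\ref{lem:zero}), or a single $+1$ and a single $-1$ against the all-ones stopper (for Lemma~\ref{lem:Dia}). Because a given slot $t$ appears as the ``$|i\rangle$'' position in one block and as the ``$|0-i\rangle$'' position in the adjacent block, I would treat the generic interior slot first and then separately verify the boundary cases $t=1$ and $t=n$, where the index $i$ appears in slot $n$ of the first block and slot $1$ of the stopper. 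Once this indexing is checked, the two lemmas do all the analytic work and the conclusion $E_{t}\propto\mathbb{I}_{d}$ follows immediately; the symmetry argument then finishes the proof for all $n\ge 3$ and $d\ge 3$.
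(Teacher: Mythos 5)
Your proposal is correct and takes essentially the same route as the paper's proof: all off-diagonal entries of $E_{t}$ are forced to vanish via Lemma~\ref{lem:zero} applied to pairs of states taken from the same block or from adjacent blocks (which are single basis vectors in slot $t$ and non-orthogonal everywhere else), and the diagonal entries are then equalized via Lemma~\ref{lem:Dia} by pairing each state carrying $|0-i\rangle$ in slot $t$ against the stopper. The only cosmetic difference is that you invoke the cyclic symmetry of the set to reduce to a single subsystem (exactly as the paper does for the four-partite Lemma~\ref{lem:$4(d-1)+1$}), whereas the paper's proof of the theorem instead runs the explicit case analysis over $1\leq t\leq n-2$, $t=n-1$, and $t=n$; both are valid.
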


\begin{proof} First of all, we give all zero entries of the matrix $E_{t}$ according to Lemma~\ref{lem:zero}. Consider the $t$-th subsystem with $1\leq t\leq n-2$. From the states $|\phi_{i+t(d-1)}\rangle$ and $|\phi_{i+(t+1)(d-1)}\rangle$, we have $m_{i,0}^{t}= m_{0,i}^{t}= 0$ for $1\leq i\leq d-1$. From $|\phi_{i+t(d-1)}\rangle$ and $|\phi_{j+t(d-1)}\rangle$, we obtain $m_{i,j}^{t}= m_{j,i}^{t}= 0$, where $1\leq i \neq j\leq d-1$. Associated with the $(n-1)$-th subsystem, we get $m_{i,0}^{n-1}= m_{0,i}^{n-1}= 0$ with $1\leq i\leq d-1$ by using the states $|\phi_{i}\rangle$ and $|\phi_{i+(n-1)(d-1)}\rangle$. Considering the states $|\phi_{i+(n-1)(d-1)}\rangle$ and $|\phi_{j+(n-1)(d-1)}\rangle$, we have $m_{i,j}^{n-1}= m_{j,i}^{n-1}= 0$, where $1\leq i \neq j\leq d-1$.
As for the $n$-th subsystem, the states $|\phi_{i}\rangle$ and $|\phi_{i+(d-1)}\rangle$ provide the condition $m_{i,0}^{n}= m_{0,i}^{n}= 0$ with $1\leq i\leq d-1$. For the states $|\phi_{i}\rangle$ and $|\phi_{j}\rangle$, we have $m_{i,j}^{n}= m_{j,i}^{n}= 0$, where $1\leq i \neq j\leq d-1$.

By Lemma~\ref{lem:Dia}, we have the relations among the diagonal entries of $E_t$. For instance from the states $|\phi_{i+(t-1)(d-1)}\rangle$ and $|\phi_{n(d-1)+1}\rangle$, we directly deduce that $m_{0,0}^{t}= m_{i,i}^{t}$ with $1\leq i\leq d-1$ and $1\leq t\leq n$.

From the above analysis, all POVM elements $M_{t}^{\dagger}M_{t}$ are proportional to the identity matrix. One cannot start with a nontrivial measurement on any subsystem. This completes the proof.
\end{proof}

The authors in Ref. \cite{Jiang2020} presented a set of $n(2d-3)+1$ nonlocal product states in $(\mathbb{C}^{d})^{\otimes n}$. It is easily seen that the set of our $n(d-1)+1$ locally indistinguishable states has less number of states
for $d>2$.

\section{Construction of Nonlocal Sets in General Multipartite Systems}

In this section, we put forward a few nonlocal product sets in~$\mathbb{C}^{d_{1}}\otimes \mathbb{C}^{d_{2}}\otimes\cdots\otimes \mathbb{C}^{d_{n}}$. We first show our construction in arbitrary tripartite quantum systems.

\begin{lemma}\label{lea:d1d2d3} The following set of $d_{2}+2d_{3}-2$ orthogonal product states in $\mathbb{C}^{d_{1}}\otimes \mathbb{C}^{d_{2}}\otimes \mathbb{C}^{d_{3}}$ ($3\leq d_{1}\leq d_{2}\leq d_{3}$) is indistinguishable under LOCC:
\begin{equation}
\begin{aligned}
|\phi_{i}\rangle&=|0-i\rangle_{1}|0\rangle_{2}|i\rangle_{3},~~~1\leq i\leq d_{1}-1\\
|\phi_{i+(d_{1}-1)}\rangle & =|i\rangle_{1}|0-i\rangle_{2}|0\rangle_{3}, ~~~1\leq i\leq d_{1}-1\\
|\phi_{i+2(d_{1}-1)}\rangle & =|0\rangle_{1}|i\rangle_{2}|0-i\rangle_{3},~~~1\leq i\leq d_{2}-1\\
|\phi_{i+d_{1}+d_{2}-2}\rangle & =|1\rangle_{1}|0-i\rangle_{2}|i\rangle_{3},~~~d_{1}\leq i\leq d_{2}-1\\
|\phi_{i+d_{1}+d_{2}-2}\rangle & =|m\rangle_{1}|1\rangle_{2}|(i-1)-i\rangle_{3},~~~d_{2}\leq i\leq d_{3}-1\\
|\phi_{i+d_{2}+d_{3}-2}\rangle & =|0-2\rangle_{1}|0-2\rangle_{2}|i\rangle_{3},~~~d_{1}\leq i\leq d_{3}-1\\
|\phi_{d_{2}+2d_{3}-2}\rangle & =|0+1+\cdots+(d_{1}-1)\rangle_{1}\\ &~~~~|0+1+\cdots+(d_{2}-1)\rangle_{2}\\ &~~~~|0+1+\cdots+(d_{3}-1)\rangle_{3},
\end{aligned}
\end{equation}
where $m = 2$ ($m = 1$) when $i$ is even (odd).
\end{lemma}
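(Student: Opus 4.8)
The plan is to follow the same two-step template as in the equal-dimensional theorem, but carried out separately for each of the three parties, since the states are no longer symmetric under permutation of the subsystems. For each $t\in\{1,2,3\}$ I will show that any orthogonality-preserving POVM on the $t$-th party has $E_{t}$ proportional to the identity, by first killing every off-diagonal entry $m_{a,b}^{t}$ via Lemma~\ref{lem:zero} and then forcing all diagonal entries to coincide via Lemma~\ref{lem:Dia} applied against the stopper $|S\rangle$. To use Lemma~\ref{lem:zero} on a target entry $m_{a,b}^{t}$ I must exhibit two states that are orthogonal in the $t$-th subsystem alone (hence overlapping in the other two) and whose $t$-th components are the single kets $|a\rangle$ and $|b\rangle$; to use Lemma~\ref{lem:Dia} I must exhibit a state whose $t$-th component has the form $|0\rangle-|c\rangle$ and which is orthogonal to $|S\rangle$ only through the $t$-th subsystem.

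For the first party this is routine: pairs of second-batch states give $m_{i,j}^{1}=0$ for $1\le i\neq j\le d_{1}-1$, the $i$-th third-batch state paired with the $i$-th second-batch state gives $m_{0,i}^{1}=0$, and the first-batch states (whose first component is $|0-i\rangle$) against $|S\rangle$ yield $m_{0,0}^{1}=m_{i,i}^{1}$. For the second party, the third-batch states dispose of $m_{i,j}^{2}=0$ with $1\le i\neq j\le d_{2}-1$ and the first/third pairs dispose of $m_{0,i}^{2}=0$ for $1\le i\le d_{1}-1$; the genuinely new point is the range $d_{1}\le i\le d_{2}-1$, where no single-ket $|0\rangle_{2}$ state shares support with the relevant $|i\rangle_{2}$ state. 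Here I pair the sixth-batch state $|0-2\rangle_{1}|0-2\rangle_{2}|i\rangle_{3}$ with the third-batch state $|0\rangle_{1}|i\rangle_{2}|0-i\rangle_{3}$: these overlap in parties $1$ and $3$ and are orthogonal in party $2$, so Lemma~\ref{lem:zero} in the form $(\langle0|-\langle2|)E_{2}|i\rangle=0$ gives $m_{0,i}^{2}=m_{2,i}^{2}$, and since $m_{2,i}^{2}=0$ is already known this forces $m_{0,i}^{2}=0$. The diagonal relations then come from the second- and fourth-batch states against $|S\rangle$.

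The third party is the hardest, and it is where the auxiliary batches earn their keep. The sixth-batch states are single kets $|l\rangle_{3}$ over the whole range $d_{1}\le l\le d_{3}-1$ carrying a fixed $|0-2\rangle_{1}|0-2\rangle_{2}$ prefix, so they pairwise annihilate every $m_{l,l'}^{3}$ with $l,l'\ge d_{1}$; the low-block entries $m_{a,b}^{3}$ with $a,b\le d_{1}-1$ are cleared by first- and second-batch pairs just as for the first party, and the cross entries linking the two blocks $\{0,\dots,d_{1}-1\}$ and $\{d_{1},\dots,d_{3}-1\}$ come from pairing sixth-batch states with first- or second-batch states (the index $2$ sitting inside $|0-2\rangle$ is exactly what keeps the needed overlaps nonzero, which is why $d_{1}\ge 3$ is assumed). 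Once all off-diagonals vanish, the third-batch states against $|S\rangle$ give $m_{0,0}^{3}=m_{j,j}^{3}$ for $1\le j\le d_{2}-1$, and the fifth-batch states $|m\rangle_{1}|1\rangle_{2}|(i-1)-i\rangle_{3}$ against $|S\rangle$ give the chain $m_{i-1,i-1}^{3}=m_{i,i}^{3}$ for $d_{2}\le i\le d_{3}-1$, propagating equality of the diagonal all the way out to index $d_{3}-1$.

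I expect the main obstacle to be exactly this bookkeeping for the third subsystem: one must verify, case by case across the several index ranges, that every chosen pair overlaps in precisely the two intended parties and is orthogonal in the third, and that the diagonal chain produced by the third batch meets the chain produced by the fifth batch cleanly at index $d_{2}-1$. The real subtlety is that several off-diagonals cannot be read off from a single direct pair and instead require feeding an already-proven zero entry (such as $m_{2,i}^{2}=0$) back into Lemma~\ref{lem:zero}; checking that such an entry is always available for the indices one needs, and that the parity-dependent choice of $m\in\{1,2\}$ in the fifth batch never spoils the overlaps, is the delicate heart of the argument.
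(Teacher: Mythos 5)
Your proposal is correct and follows essentially the same route as the paper's own proof: Lemma~\ref{lem:zero} pairs for the off-diagonal entries (including the key trick of pairing the sixth-batch state $|0-2\rangle_{1}|0-2\rangle_{2}|i\rangle_{3}$ with the third-batch state and feeding back the already-known zero $m_{2,i}^{2}=0$ to get $m_{0,i}^{2}=0$ for $d_{1}\le i\le d_{2}-1$), then Lemma~\ref{lem:Dia} against the stopper for the diagonals, with the fifth batch supplying exactly the chain $m_{i-1,i-1}^{3}=m_{i,i}^{3}$ that links up with the third-batch relations at index $d_{2}-1$. The pairings you list for each party coincide with those in the paper's Tables~\ref{1_result} and \ref{2_result}, including the use of the second-batch state with first component $|2\rangle$ (requiring $d_{1}\ge 3$) to clear the cross entries $m_{0,l}^{3}$.
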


\begin{proof} According to Lemma~\ref{lem:zero}, we first obtain that most off-diagonal elements of the matrix $E_{t}=(m_{i,j}^{t})_{i,j\in \mathbb Z_{d_{t}}}$ are zeros from Table \ref{1_result}:
\begin{table}[htbp]
    \newcommand{\tabincell}[2]{\begin{tabular}{@{}#1@{}}#2\end{tabular}}
    \centering
    \caption{\label{1_result}Zero entries of the matrix $E_{t}=(m_{i,j}^{t})_{i,j\in \mathbb Z_{d_{t}}}$.}
    \begin{tabular}{ccc}
        \toprule
        \hline
        \hline
        \specialrule{0em}{1.5pt}{1.5pt}
        ~~Pairs of states~~~&~~~~Zero entries~~~&~~~Range~~~\\
        \specialrule{0em}{1.5pt}{1.5pt}
        \midrule
        \hline
        \specialrule{0em}{1.5pt}{1.5pt}
        \tabincell{c}{$|\phi_{i+(d_{1}-1)}\rangle$ \\$|\phi_{i+2(d_{1}-1)}\rangle$}&\tabincell{c}{$m_{i,0}^{1}= m_{0,i}^{1}= 0$}&\tabincell{c}{$1\leq i\leq d_{1}-1$} \\
        \specialrule{0em}{3.5pt}{3.5pt}
        \tabincell{c}{$|\phi_{i+(d_{1}-1)}\rangle$ \\$|\phi_{j+(d_{1}-1)}\rangle$}&\tabincell{c}{$m_{i,j}^{1}= m_{j,i}^{1}= 0$}&\tabincell{c}{$1\leq i \neq j\leq d_{1}-1$} \\
        \specialrule{0em}{3.5pt}{3.5pt}
        \tabincell{c}{$|\phi_{i}\rangle$ \\$|\phi_{i+2(d_{1}-1)}\rangle$}&\tabincell{c}{$m_{i,0}^{2}= m_{0,i}^{2}= 0$}&\tabincell{c}{$1\leq i\leq d_{1}-1$} \\
        \specialrule{0em}{3.5pt}{3.5pt}
        \tabincell{c}{$|\phi_{i+2(d_{1}-1)}\rangle$ \\$|\phi_{j+2(d_{1}-1)}\rangle$}&\tabincell{c}{$m_{i,j}^{2}= m_{j,i}^{2}= 0$}&\tabincell{c}{$1\leq i \neq j\leq d_{2}-1$} \\
        \specialrule{0em}{3.5pt}{3.5pt}
        \tabincell{c}{$|\phi_{i}\rangle$ \\$|\phi_{i+(d_{1}-1)}\rangle$}&\tabincell{c}{$m_{i,0}^{3}= m_{0,i}^{3}= 0$}&\tabincell{c}{$1\leq i\leq d_{1}-1$} \\
        \specialrule{0em}{3.5pt}{3.5pt}
        \tabincell{c}{$|\phi_{d_{1}+1}\rangle$ \\$|\phi_{i+d_{2}+d_{3}-2}\rangle$}&\tabincell{c}{$m_{i,0}^{3}= m_{0,i}^{3}= 0$}&\tabincell{c}{$d_{1}\leq i \leq d_{3}-1$} \\
        \specialrule{0em}{3.5pt}{3.5pt}
        \tabincell{c}{$|\phi_{i}\rangle$ \\$|\phi_{j}\rangle$}&\tabincell{c}{$m_{i,j}^{3}= m_{j,i}^{3}= 0$}&\tabincell{c}{$1\leq i \neq j\leq d_{1}-1$} \\
        \specialrule{0em}{3.5pt}{3.5pt}
        \tabincell{c}{$|\phi_{i}\rangle$ \\$|\phi_{j+d_{2}+d_{3}-2}\rangle$}&\tabincell{c}{$m_{i,j}^{3}= m_{j,i}^{3}= 0$}&\tabincell{c}{$1\leq i \leq d_{1}-1$\\$d_{1}\leq j \leq d_{3}-1$} \\
        \specialrule{0em}{3.5pt}{3.5pt}
        \tabincell{c}{$|\phi_{i+d_{2}+d_{3}-2}\rangle$ \\$|\phi_{j+d_{2}+d_{3}-2}\rangle$}&\tabincell{c}{$m_{i,j}^{3}= m_{j,i}^{3}= 0$}&\tabincell{c}{$d_{1}\leq i\neq j\leq d_{3}-1$}\\
        \bottomrule
        \specialrule{0em}{1.5pt}{1.5pt}
        \hline
        \hline
    \end{tabular}
\end{table}

From the states $|\phi_{i+2(d_{1}-1)}\rangle$ and $|\phi_{j+d_{2}+d_{3}-2}\rangle$, we have  $\langle0|I_{1}|0-2\rangle \langle i|M_{2}^{\dagger}M_{2}|0-2\rangle \langle 0-i|I_{3}|j\rangle=0$ for $d_{1}\leq i=j\leq d_{2}-1$, that is, $\langle i|M_{2}^{\dagger}M_{2}|0-2\rangle=0$. Since $m_{i,2}^{2}= 0$, we get $m_{i,0}^{2}= m_{0,i}^{2}= 0$ with $d_{1}\leq i\leq d_{2}-1$.
\begin{table}[htbp]
    \newcommand{\tabincell}[2]{\begin{tabular}{@{}#1@{}}#2\end{tabular}}
    \centering
    \caption{\label{2_result}~Diagonal entries of $E_{t}=(m_{i,j}^{t})_{i,j\in \mathbb Z_{d_{t}}}$.}
    \begin{tabular}{ccc}
        \toprule
        \hline
        \hline
        \specialrule{0em}{1.5pt}{1.5pt}
        ~~~Pairs of states~~~&~~~Diagonal entries~~&~~Range~~~\\
        \specialrule{0em}{1.5pt}{1.5pt}
        \midrule
        \hline
        \specialrule{0em}{1.5pt}{1.5pt}
        \tabincell{c}{$|\phi_{i}\rangle$ \\$|\phi_{d_{2}+2d_{3}-2}\rangle$}&\tabincell{c}{$m_{0,0}^{1}= m_{i,i}^{1}$}&\tabincell{c}{$1\leq i\leq d_{1}-1$} \\
        \specialrule{0em}{3.5pt}{3.5pt}
        \tabincell{c}{$|\phi_{i+(d_{1}-1)}\rangle$ \\$|\phi_{d_{2}+2d_{3}-2}\rangle$}&\tabincell{c}{$m_{0,0}^{2}= m_{i,i}^{2}$}&\tabincell{c}{$1\leq i\leq d_{1}-1$} \\
        \specialrule{0em}{3.5pt}{3.5pt}
        \tabincell{c}{$|\phi_{i+d_{1}+d_{2}-2}\rangle$ \\$|\phi_{d_{2}+2d_{3}-2}\rangle$}&\tabincell{c}{$m_{0,0}^{2}= m_{i,i}^{2}$}&\tabincell{c}{$d_{1}\leq i\leq d_{2}-1$} \\
        \specialrule{0em}{3.5pt}{3.5pt}
        \tabincell{c}{$|\phi_{i+2(d_{1}-1)}\rangle$ \\$|\phi_{d_{2}+2d_{3}-2}\rangle$}&\tabincell{c}{$m_{0,0}^{3}= m_{i,i}^{3}$}&\tabincell{c}{$1\leq i \leq d_{2}-1$} \\
        \specialrule{0em}{3.5pt}{3.5pt}
        \tabincell{c}{$|\phi_{i+d_{1}+d_{2}-2}\rangle$ \\$|\phi_{d_{2}+2d_{3}-2}\rangle$}&\tabincell{c}{$m_{(i-1),(i-1)}^{3}= m_{i,i}^{3}$}&\tabincell{c}{$d_{2}\leq i\leq d_{3}-1$}\\
        \bottomrule
        \specialrule{0em}{1.5pt}{1.5pt}
        \hline
        \hline
    \end{tabular}
\end{table}

By Lemma~\ref{lem:Dia}, all the diagonal entries of the matrix $E_{t}$ are equal from Table \ref{2_result}. Thus, nobody can start with a non-trivial measurement on any subsystem. This completes the proof.
\end{proof}

In recent years, there has been a lot of research on nonlocal orthogonal product states in $\mathbb{C}^{d_{1}}\otimes \mathbb{C}^{d_{2}}\otimes \mathbb{C}^{d_{3}}$. The previous set of $2(d_{1}+d_{3})-3$ nonlocal states was constructed by Wang {\it et al}. \cite{Wangyl2017}. Our construction is clearly superior to this result for $d_{2}\leq 2d_{1}-1$. Moreover, we have the following general conclusions.

\begin{theorem}
\label{the:dn} The following set of $\sum\limits_{i=2}^{n-1}d_{i}+2d_{n}-n+1$ orthogonal product states in $\mathbb{C}^{d_{1}}\otimes\mathbb{C}^{d_{2}}\otimes\cdots\otimes\mathbb{C}^{d_{n}}$ cannot be perfectly distinguished by LOCC for $3\leq d_{1}\leq d_{2}\cdots\leq d_{n}$ and $n\geq 3$:
\begin{widetext}
\begin{equation}
\begin{aligned}
\mathcal{B}_{1}:~\{|\phi_{i}\rangle&=|0-i\rangle_{1}
|0\rangle_{2}|0\rangle_{3}\cdots|0\rangle_{n-2}|0\rangle_{n-1}|i\rangle_{n}\mid~i\in[1, d_{1}-1]\},\\
\mathcal{B}_2:~\{|\phi_{i+(d_{1}-1)}\rangle& =|i\rangle_{1}|0-i\rangle_{2}|0\rangle_{3}\cdots
|0\rangle_{n-2}|0\rangle_{n-1}|0\rangle_{n}\mid ~i\in[1, d_{1}-1]\},\\
\mathcal{B}_3:~\{|\phi_{i+2(d_{1}-1)}\rangle& =|0\rangle_{1}|i\rangle_{2}|0-i\rangle_{3}\cdots
|0\rangle_{n-2}|0\rangle_{n-1}|0\rangle_{n}\mid~i\in[1, d_{2}-1]\},\\
& ~~~~~~~~\cdots~ \cdots ~\cdots~~~~~~~~~~~~\\
\mathcal{B}_{n-1}:~\{|\phi_{i+(d_{1}-1)+\sum_{i=1}^{n-3}(d_{i}-1)}\rangle& =|0\rangle_{1}|0\rangle_{2}|0\rangle_{3}\cdots
|i\rangle_{n-2}|0-i\rangle_{n-1}|0\rangle_{n}\mid~i\in[1, d_{n-2}-1]\},\\
\mathcal{B}_{n}:~\{|\phi_{i+(d_{1}-1)+\sum_{i=1}^{n-2}(d_{i}-1)}\rangle& =|0\rangle_{1}|0\rangle_{2}|0\rangle_{3}\cdots
|0\rangle_{n-2}|i\rangle_{n-1}|0-i\rangle_{n}\mid~i\in[1, d_{n-1}-1]\},\\
\mathcal{B}_{n+1}:~\{|\phi_{i+\sum_{i=1}^{n-1}(d_{i}-1)}\rangle&
=|1\rangle_{1}|0-i\rangle_{2}|i\rangle_{3}|0\rangle_{4}\cdots
|0\rangle_{n-2}|0\rangle_{n-1}|0\rangle_{n}\mid~i\in[d_{1}, d_{2}-1]\},\\
\mathcal{B}_{n+2}:~\{|\phi_{i+\sum_{i=1}^{n-1}(d_{i}-1)}\rangle& =|0\rangle_{1}|1\rangle_{2}|0-i\rangle_{3}|i\rangle_{4}\cdots
|0\rangle_{n-2}|0\rangle_{n-1}|0\rangle_{n}\mid~i\in[d_{2}, d_{3}-1]\},\\
& ~~~~~~~~\cdots~ \cdots ~\cdots~~~~~~~~~~~~\\
\mathcal{B}_{2n-2}:~\{|\phi_{i+\sum_{i=1}^{n-1}(d_{i}-1)}\rangle& =|0\rangle_{1}|0\rangle_{2}|0\rangle_{3}|0\rangle_{4}\cdots
|1\rangle_{n-2}|0-i\rangle_{n-1}|i\rangle_{n}\mid~i\in[d_{n-2}, d_{n-1}-1]\},\\
\mathcal{B}_{2n-1}:~\{|\phi_{i+\sum_{i=1}^{n-1}(d_{i}-1)}\rangle& =|m\rangle_{1}|0\rangle_{2}|0\rangle_{3}\cdots|0\rangle_{n-2}
|1\rangle_{n-1}|(i-1)-i\rangle_{n}\mid~i\in[d_{n-1}, d_{n}-1]\},\\
\mathcal{B}_{2n}:~\{|\phi_{i+\sum_{i=2}^{n}(d_{i}-1)}\rangle& =|0-2\rangle_{1}|0\rangle_{2}|0\rangle_{3}\cdots|0\rangle_{n-2}
|0-2\rangle_{n-1}|i\rangle_{n}\mid~i\in[d_{1}, d_{n}-1]\},\\
\mathcal{B}_{2n+1}:~\{|\phi_{\sum_{i=2}^{n-1}d_{i}+2d_{n}-n+1}\rangle& =|0+1+\cdots+(d_{1}-1)\rangle_{1}|0+1+\cdots+(d_{2}-1)\rangle_{2}\\ &~~~~\cdots|0+1+\cdots+(d_{n-1}-1)\rangle_{n-1}|0+1+\cdots+(d_{n}-1)\rangle_{n}\},
\end{aligned}
\end{equation}
where $m = 2$ ($m = 1$) when $i$ is even (odd).
\end{widetext}
\end{theorem}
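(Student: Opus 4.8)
The plan is to prove LOCC indistinguishability exactly as in the equal-dimensional construction above and in the tripartite Lemma~\ref{lea:d1d2d3}: I will show that any orthogonality-preserving POVM acting on a single party $t$ is forced to have its matrix $E_t=(m_{i,j}^t)_{i,j\in\mathbb{Z}_{d_t}}$ proportional to the identity, which by the criterion recalled in the Preliminaries makes every party trivial and hence renders the set locally indistinguishable. Because the dimensions $d_1\le\cdots\le d_n$ differ, the set is not permutation-symmetric, so I cannot collapse to a single subsystem; instead I treat the parties in three groups that receive parallel arguments: party $1$, the interior parties $2\le t\le n-2$, and the two boundary parties $n-1$ and $n$. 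Throughout, the chain $\mathcal{B}_1,\dots,\mathcal{B}_n$ supplies the short-range relations, while the auxiliary families $\mathcal{B}_{n+1},\dots,\mathcal{B}_{2n}$ are exactly what is needed to reach the larger index ranges created by the bigger dimensions.

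First I would collect the off-diagonal zeros using Lemma~\ref{lem:zero}. For each party $t$ the mechanism is to exhibit two states that are orthogonal in slot $t$ and non-orthogonal in every other slot, each carrying a single amplitude in slot $t$. For $1\le t\le n-2$ the pair $\{\mathcal{B}_{t+1},\mathcal{B}_{t+1}\}$ (distinct indices) gives $m_{i,j}^t=0$ and the pair $\{\mathcal{B}_{t+1},\mathcal{B}_{t+2}\}$ gives $m_{i,0}^t=0$ over the full range $1\le i\ne j\le d_t-1$, the ranges matching since $[1,d_t-1]\subseteq[1,d_{t+1}-1]$. Party $n$ is handled by $\{\mathcal{B}_1,\mathcal{B}_1\}$ and $\{\mathcal{B}_1,\mathcal{B}_2\}$ for $i\in[1,d_1-1]$, and by $\{\mathcal{B}_{2n},\mathcal{B}_{2n}\}$, $\{\mathcal{B}_1,\mathcal{B}_{2n}\}$ together with the pair formed by the member $i=2$ of $\mathcal{B}_2$ and $\mathcal{B}_{2n}$, which covers $i\in[d_1,d_n-1]$. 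The genuinely different case is party $n-1$: the chain terminates at $\mathcal{B}_n$, so after $\{\mathcal{B}_n,\mathcal{B}_1\}$ yields $m_{i,0}^{n-1}=0$ only for $i\in[1,d_1-1]$, I must borrow the two-step device of Lemma~\ref{lea:d1d2d3}. Pairing $\mathcal{B}_n$ with $\mathcal{B}_{2n}$ at a common index $i\in[d_1,d_{n-1}-1]$ — they overlap in every slot but the $(n-1)$-th, where they read $|i\rangle$ against $|0-2\rangle$ — orthogonality forces $\langle i|E_{n-1}|0-2\rangle=0$, i.e. $m_{i,0}^{n-1}-m_{i,2}^{n-1}=0$; since $i\ge d_1\ge3$ the entry $m_{i,2}^{n-1}$ was already shown to vanish, so $m_{i,0}^{n-1}=m_{0,i}^{n-1}=0$.

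With every $E_t$ now diagonal, I would equate its diagonal entries through Lemma~\ref{lem:Dia}, each time pairing against the stopper $\mathcal{B}_{2n+1}$ a state that carries exactly one $+1$ and one $-1$ in slot $t$ and a single-basis factor (hence non-vanishing overlap with the stopper) in every other slot; the families are designed precisely so that the only minus-type superposition sits in the target slot, which keeps Lemma~\ref{lem:Dia} non-vacuous. The chain state $\mathcal{B}_t$, via its $|0-i\rangle$ factor in slot $t$, gives $m_{0,0}^t=m_{i,i}^t$ for the low indices, and $\mathcal{B}_{n+t-1}$ supplies the same relation for $i\in[d_{t-1},d_t-1]$, so together they exhaust $\mathbb{Z}_{d_t}$ for $2\le t\le n-1$ (party $1$ is covered by $\mathcal{B}_1$ alone). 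For the top party the staircase family $\mathcal{B}_{2n-1}$, whose last factor is $|(i-1)-i\rangle_n$, yields $m_{i-1,i-1}^n=m_{i,i}^n$ for $i\in[d_{n-1},d_n-1]$ and thereby chains the diagonal from $d_{n-1}-1$ up to $d_n-1$. Consequently each $E_t$ is a scalar multiple of the identity, no party can initiate a nontrivial measurement, and the set cannot be perfectly distinguished by LOCC.

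The principal difficulty is not any single identity but the bookkeeping across the non-uniform dimensions: for every party one must verify that the chosen pair of states is orthogonal in precisely the intended slot and non-orthogonal in all others, and that the index intervals contributed by the chain and by $\mathcal{B}_{n+1},\dots,\mathcal{B}_{2n}$ tile $\{0,1,\dots,d_t-1\}$ without a gap. The only non-mechanical step is the boundary party $n-1$, where the direct application of Lemma~\ref{lem:zero} fails and the $|0-2\rangle$ cancellation is unavoidable; this makes the order of deductions essential, since $m_{i,2}^{n-1}=0$ must be in hand before the cancellation can be invoked. A final consistency check is that the listed families indeed contain $\sum_{i=2}^{n-1}d_i+2d_n-n+1$ mutually orthogonal states, which follows from a short telescoping count.
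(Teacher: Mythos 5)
Your proposal is correct and follows essentially the same route as the paper's own proof: the same Lemma~\ref{lem:zero} pairs (including the $\mathcal{B}_2$--$\mathcal{B}_{2n}$ pairing via $|\phi_{d_1+1}\rangle$ for party $n$ and the $\mathcal{B}_n$--$\mathcal{B}_{2n}$ cancellation $m_{i,0}^{n-1}=m_{i,2}^{n-1}=0$ for party $n-1$), and the same Lemma~\ref{lem:Dia} pairings with the stopper, using $\mathcal{B}_{n+t-1}$ for the high-index diagonal entries and the staircase $\mathcal{B}_{2n-1}$ to chain the diagonal of $E_n$. The only difference is presentational: you organize the deductions by party rather than by the paper's tables, and you should state explicitly that the intra-$\mathcal{B}_n$ pairs give $m_{i,j}^{n-1}=0$ for $1\leq i\neq j\leq d_{n-1}-1$, since your cancellation step silently relies on $m_{i,2}^{n-1}=0$ from exactly those pairs.
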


\begin{proof} By Lemma~\ref{lem:zero}, we have the most zero entries of the matrix $E_{t}= M_{t}^{\dagger}M_{t}$ from Table \ref{3_result}.

According to the two groups of states $\mathcal {B}_{n}$ and $\mathcal {B}_{2n}$, we have $\langle0|I_{1}|0-2\rangle\langle0|I_{2}|0\rangle$ $\cdots\langle i|M_{n-1}^{\dagger}M_{n-1}|0-2\rangle\langle 0-i|I_{n}|i\rangle=0$ for $d_{1}\leq i\leq d_{n-1}-1$. Thus we get $\langle i|M_{n-1}^{\dagger}M_{n-1}|0-2\rangle=0$, i.e., $m_{i,0}^{n-1}-m_{i,2}^{n-1}=0$. From Table~\ref{3_result}, $m_{i,2}^{n-1}=0$, which implies $m_{i,0}^{n-1}=m_{0,i}^{n-1}=0$. Therefore, all the off-diagonal entries of the matrix $E_{t}=(m_{i,j}^{t})_{i,j\in \mathbb Z_{d_{t}}}$ are zeros.
\begin{table*}
    \newcommand{\tabincell}[2]{\begin{tabular}{@{}#1@{}}#2\end{tabular}}
    \centering
    \caption{\label{3_result}Zero entries of the matrix $E_{t}=(m_{i,j}^{t})_{i,j\in \mathbb Z_{d_{t}}}$.}
    \begin{tabular}{cccc}
        \toprule
        \hline
        \hline
        \specialrule{0em}{1.5pt}{1.5pt}
        ~~~~~~~~~~~~~Sets~~~~~~~~~~~~~&~~~~~~~~~~~Pairs of states ~~~~~~~~~~~~~~~~&~~~~~~~~~~~~~~~~~Zero entries~~~~~~~~~~~~~~~~~&~~~~~~~Range ~~~~~~~\\
        \specialrule{0em}{1.5pt}{1.5pt}
        \midrule
        \hline
        \specialrule{0em}{1.5pt}{1.5pt}
        \tabincell{c}{$\mathcal {B}_{2}$\\$\mathcal {B}_{3}$}&\tabincell{c}{$|\phi_{i+(d_{1}-1)}\rangle$ \\$|\phi_{i+2(d_{1}-1)}\rangle$}&\tabincell{c}{$m_{i,0}^{1}= m_{0,i}^{1}= 0$}&\tabincell{c}{$1\leq i\leq d_{1}-1$} \\
        \specialrule{0em}{3.5pt}{3.5pt}
        \tabincell{c}{$\mathcal {B}_{2}$}&\tabincell{c}{$|\phi_{i+(d_{1}-1)}\rangle$ \\$|\phi_{j+(d_{1}-1)}\rangle$}&\tabincell{c}{$m_{i,j}^{1}= m_{j,i}^{1}= 0$}&\tabincell{c}{$1\leq i\neq j\leq d_{1}-1$} \\
        \specialrule{0em}{3.5pt}{3.5pt}
        \tabincell{c}{$\mathcal {B}_{t}$\\ $\mathcal {B}_{t+1}$}&\tabincell{c}{$|\phi_{i+(d_{1}-1)+\sum_{i=1}^{t-2}(d_{i}-1)}\rangle$ \\$|\phi_{i+(d_{1}-1)+\sum_{i=1}^{t-1}(d_{i}-1)}\rangle$}&\tabincell{c}{$m_{i,0}^{t-1}= m_{0,i}^{t-1}= 0$}&\tabincell{c}{$1\leq i\leq d_{t-1}-1$\\$3\leq t\leq n-1$} \\
        \specialrule{0em}{3.5pt}{3.5pt}
        \tabincell{c}{ $\mathcal {B}_{t}$}&\tabincell{c}{$|\phi_{i+(d_{1}-1)+\sum_{i=1}^{t-2}(d_{i}-1)}\rangle$ \\$|\phi_{j+(d_{1}-1)+\sum_{j=1}^{t-2}(d_{j}-1)}\rangle$}&\tabincell{c}{$m_{i,j}^{t-1}= m_{j,i}^{t-1}= 0$}&\tabincell{c}{$1\leq i\neq j\leq d_{t-1}-1$\\$3\leq t\leq n-1$} \\
        \specialrule{0em}{3.5pt}{3.5pt}
        \tabincell{c}{$\mathcal {B}_{1}$\\$\mathcal {B}_{n}$}&\tabincell{c}{$|\phi_{i}\rangle$ \\$|\phi_{i+(d_{1}-1)+\sum_{i=1}^{n-2}(d_{i}-1)}\rangle$}&\tabincell{c}{$m_{i,0}^{n-1}= m_{0,i}^{n-1}= 0$}&\tabincell{c}{$1\leq i\leq d_{1}-1$} \\
        \specialrule{0em}{3.5pt}{3.5pt}
        \tabincell{c}{$\mathcal {B}_{n}$}&\tabincell{c}{$|\phi_{i+(d_{1}-1)+\sum_{i=1}^{n-2}(d_{i}-1)}\rangle$ \\$|\phi_{j+(d_{1}-1)+\sum_{j=1}^{n-2}(d_{j}-1)}\rangle$}&\tabincell{c}{$m_{i,j}^{n-1}= m_{j,i}^{n-1}= 0$}&\tabincell{c}{$1\leq i \neq j\leq d_{n-1}-1$} \\
        \specialrule{0em}{3.5pt}{3.5pt}
        \tabincell{c}{$\mathcal {B}_{1}$\\$\mathcal {B}_{2}$}&\tabincell{c}{$|\phi_{i}\rangle$ \\$|\phi_{i+(d_{1}-1)}\rangle$}&\tabincell{c}{$m_{i,0}^{n}= m_{0,i}^{n}= 0$}&\tabincell{c}{$1\leq i\leq d_{1}-1$} \\
        \specialrule{0em}{3.5pt}{3.5pt}
        \tabincell{c}{$\mathcal {B}_{2}$\\$\mathcal {B}_{2n}$}&\tabincell{c}{$|\phi_{d_{1}+1}\rangle$ \\$|\phi_{i+\sum_{i=2}^{n}(d_{i}-1)}\rangle$}&\tabincell{c}{$m_{i,0}^{n}= m_{0,i}^{n}= 0$}&\tabincell{c}{$d_{1}\leq i\leq d_{n}-1$} \\
        \specialrule{0em}{3.5pt}{3.5pt}
        \tabincell{c}{$\mathcal {B}_{1}$}&\tabincell{c}{$|\phi_{i}\rangle$ \\$|\phi_{j}\rangle$}&\tabincell{c}{$m_{i,j}^{n}= m_{j,i}^{n}= 0$}&\tabincell{c}{$1\leq i\neq j\leq d_{1}-1$} \\
        \specialrule{0em}{3.5pt}{3.5pt}
        \tabincell{c}{$\mathcal {B}_{1}$\\$\mathcal {B}_{2n}$}&\tabincell{c}{$|\phi_{i}\rangle$ \\$|\phi_{j+\sum_{j=2}^{n}(d_{j}-1)}\rangle$}&\tabincell{c}{$m_{i,j}^{n}= m_{j,i}^{n}= 0$}&\tabincell{c}{$1\leq i \leq d_{1}-1$\\$d_{1}\leq j \leq d_{n}-1$} \\
        \specialrule{0em}{3.5pt}{3.5pt}
        \tabincell{c}{$\mathcal {B}_{2n}$}&\tabincell{c}{$|\phi_{i+\sum_{i=2}^{n}(d_{i}-1)}\rangle$ \\$|\phi_{j+\sum_{j=2}^{n}(d_{j}-1)}\rangle$}&\tabincell{c}{$m_{i,j}^{n}= m_{j,i}^{n}= 0$}&\tabincell{c}{$d_{1}\leq i\neq j\leq d_{n}-1$} \\
        \bottomrule
        \specialrule{0em}{1.5pt}{1.5pt}
        \hline
        \hline
    \end{tabular}
\end{table*}

By Lemma~\ref{lem:Dia}, we have $m_{0,0}^{t}= m_{i,i}^{t}$ with $1\leq t\leq n$ and $1\leq i\leq d_{t}-1$. It shows that all diagonal entries are equal from Table \ref{4_result}.
\begin{table}[H]
    \newcommand{\tabincell}[2]{\begin{tabular}{@{}#1@{}}#2\end{tabular}}
    \centering
    \caption{\label{4_result}Diagonal entries of $E_{t}=(m_{i,j}^{t})_{i,j\in \mathbb Z_{d_{t}}}$.}
    \begin{tabular}{ccc}
        \toprule
        \hline
        \hline
        \specialrule{0em}{1.5pt}{1.5pt}
        Pair of states~&Diagonal entries&~Range\\
        \specialrule{0em}{1.5pt}{1.5pt}
        \midrule
        \hline
        \specialrule{0em}{1.5pt}{1.5pt}
        \tabincell{c}{$|\phi_{i}\rangle$ \\$|\phi_{\sum_{i=2}^{n-1}d_{i}+2d_{n}-n+1}\rangle$}&\tabincell{c}{$m_{0,0}^{1}= m_{i,i}^{1}$}&\tabincell{c}{$1\leq i\leq d_{1}-1$} \\
        \specialrule{0em}{3.5pt}{3.5pt}
        \tabincell{c}{$|\phi_{i+(d_{1}-1)}\rangle$ \\$|\phi_{\sum_{i=2}^{n-1}d_{i}+2d_{n}-n+1}\rangle$}&\tabincell{c}{$m_{0,0}^{2}= m_{i,i}^{2}$}&\tabincell{c}{$1\leq i\leq d_{1}-1$} \\
        \specialrule{0em}{3.5pt}{3.5pt}
        \tabincell{c}{$|\phi_{i+(d_{1}-1)+\sum_{i=1}^{t-2}(d_{i}-1)}\rangle$ \\$|\phi_{\sum_{i=2}^{n-1}d_{i}+2d_{n}-n+1}\rangle$}&\tabincell{c}{$m_{0,0}^{t}= m_{i,i}^{t}$}&\tabincell{c}{$1\leq i \leq d_{t-1}-1$\\$3\leq t\leq n$} \\
        \specialrule{0em}{3.5pt}{3.5pt}
        \tabincell{c}{$|\phi_{i+\sum_{i=1}^{n-1}(d_{i}-1)}\rangle$ \\$|\phi_{\sum_{i=2}^{n-1}d_{i}+2d_{n}-n+1}\rangle$}&\tabincell{c}{$m_{0,0}^{t+1}= m_{i,i}^{t+1}$}&\tabincell{c}{$d_{t}\leq i \leq d_{t+1}-1$\\$1\leq t\leq n-2$}\\
        \specialrule{0em}{3.5pt}{3.5pt}
        \tabincell{c}{$|\phi_{i+\sum_{i=1}^{n-1}(d_{i}-1)}\rangle$ \\$|\phi_{\sum_{i=2}^{n-1}d_{i}+2d_{n}-n+1}\rangle$}&\tabincell{c}{$m_{(i-1),(i-1)}^{n}= m_{i,i}^{n}$}&\tabincell{c}{$d_{n-1}\leq i \leq d_{n}-1$}\\
        \bottomrule
        \specialrule{0em}{1.5pt}{1.5pt}
        \hline
        \hline
    \end{tabular}
\end{table}

In conclusion, all individual parties cannot start with a nontrivial measurement. The set of $\sum\limits_{i=2}^{n-1}d_{i}+2d_{n}-n+1$ orthogonal product states is indistinguishable by LOCC.
\end{proof}

In 2020, Jiang {\it et al}. presented $\sum\limits_{i=1}^{n}(2d_{i}-3)+1$ nonlocal quantum states in $\mathbb{C}^{d_{1}}\otimes \mathbb{C}^{d_{2}}\otimes\cdots\otimes \mathbb{C}^{d_{n}}$ \cite{Jiang2020}. The set of  nonlocal orthogonal product states proposed by us has less members in general.

\section{Conclusion}

The phenomenon of nonlocality without entanglement has attracted much attention in recent years. It is challenging to obtain an optimal set of nonlocal multipartite quantum states. In this paper, we have presented the set of $n(d-1)+1$ orthogonal product states which are indistinguishable by LOCC in $(\mathbb{C}^{d})^{\otimes n}$. Furthermore, we put forward the set of $\sum\limits_{i=2}^{n-1}d_{i}+2d_{n}-n+1$ nonlocal orthogonal product states for arbitrary multipartite quantum systems with less members. Note that the recent minimum size of $2n-1$ members in $\mathbb{C}^{m}\otimes \mathbb{C}^{n}$, proposed by Zhang {\it et al}. in Ref \cite{Zhangzc2016}, can be regarded as a special case of our results for multipartite quantum systems. It shows that our construction is universal and optimal.

\section*{Acknowledgments}

This work is supported by the Natural Science Foundation of Hebei Province (No. F2021205001); NSFC (Grants No. 62272208, No. 11871019, No. 12075159, No. 12171044); Beijing Natural Science Foundation (No. Z190005); the Academy for Multidisciplinary Studies, Capital Normal University; the Academician Innovation Platform of Hainan Province; and Shenzhen Institute for Quantum Science and Engineering, Southern University of Science and Technology (No. SIQSE202001).

\end{document}